\documentclass[11pt]{article}

\input{glyphtounicode}
\usepackage[authoryear,round,semicolon]{natbib}
\usepackage{makecell} %
\usepackage{booktabs}
\usepackage{amsmath,amsthm,amssymb} %
\usepackage[affil-it]{authblk} %
\usepackage[english]{babel} %
\usepackage{caption} %
\usepackage{color} %
\usepackage{algorithmic} %
\usepackage{algorithm} %
\usepackage{enumitem} %
\usepackage{mathpazo} 
\usepackage{framed} %
\usepackage[margin=1in]{geometry} %
\usepackage{graphics} %
\usepackage{hyphenat} %
\usepackage[breaklinks]{hyperref} %
\usepackage{mathtools} %
\usepackage{microtype} %
\usepackage[T1]{fontenc} %
\usepackage[utf8]{inputenc} %
\usepackage{soul} %
\usepackage{subcaption} %
\usepackage{subdepth} %
\usepackage{suffix} 
\usepackage{tikz} %
\usepackage{xspace} %
\usepackage{thmtools}
\usepackage{thm-restate}
\usepackage{interval}
\intervalconfig{soft open fences}
\usepackage{cleveref}

\hypersetup{colorlinks=true, linkcolor=blue, citecolor=magenta} %
\setcitestyle{aysep={,},yysep={;}}
\definecolor{shadecolor}{rgb}{.95,.95,.95} %

\setul{1ex}{.5pt} %
\setlist[enumerate]{nolistsep,itemsep=3pt,topsep=3pt} %

\definecolor{White}{rgb}{1,1,1} %
\definecolor{Black}{rgb}{0,0,0} %
\definecolor{LightGray}{rgb}{.8,.8,.8} %

\theoremstyle{plain}
\newtheorem{theorem}{Theorem} %
\newtheorem{lemma}[theorem]{Lemma} %
\theoremstyle{definition}
\newtheorem{conjecture}[theorem]{Conjecture} %
\theoremstyle{remark}
\newcommand{\RR}{\mathbb{R}} %
\newcommand{\trace}{\operatorname{Tr}} %
\renewcommand{\epsilon}{\varepsilon} %

\newcommand{\Top}{\operatorname{Top}}
\newcommand{\OPT}{\operatorname{OPT}}
\newcommand{\KL}{\operatorname{KL}}
\newcommand{\rank}{\operatorname{rank}}

\newif\ifanonymous
\anonymousfalse

\begin{document}

\title{\Large\bf A Correlation-Gap Bound for Nonlinear Gaussian PCA}
\date{}
\renewcommand*{\Affilfont}{\small\itshape} 
\ifanonymous
\author{Anonymous Authors}
\affil{}
\else
\author[1,2]{Minbo Gao}
\author[3]{Zhengfeng Ji}
\author[1,2]{Chenghua Liu}
\affil[1]{
  Institute of Software, Chinese Academy of Sciences, Beijing, China
}
\affil[2]{
  University of Chinese Academy of Sciences, Beijing, China
}
\affil[3]{
Department of Computer Science and Technology, Tsinghua University, Beijing, China
}
\fi

\maketitle

\begin{abstract}
Principal component analysis (PCA) is optimal for the linear reconstruction of Gaussian data, a foundational property underlying its central role in algorithms and signal processing. Its nonlinear analogue, however, is notoriously subtle: in 2011, Mallat and Zeitouni conjectured that the Karhunen--Lo\`eve (KL) basis remains optimal even when the retained coordinates are chosen adaptively per sample, a property that would theoretically justify the ubiquitous pipeline of PCA followed by sparse thresholding. In this paper, we establish a $1+O(1/\sqrt{d})$-approximate version of the retained-energy form of the Mallat--Zeitouni conjecture, showing that the KL basis is within this factor of the optimal basis.
This dimension-free comparison depends only on the number of retained coordinates and shows that the possible advantage of optimizing over all orthonormal bases vanishes as $d$ grows. It complements the universal-constant reconstruction-error comparison of Litvak and Tikhomirov (\emph{Ann.\ Appl.\ Probab.}, 2018), while  providing a comparison naturally suited for algorithmic analysis.
Our proof rests on a clean, conceptual reduction: we relax arbitrary rotations to a deterministic threshold bound via Schur--Horn majorization, and identify the remaining loss with the correlation gap of the rank-$d$ uniform matroid over Gaussian level sets.
\end{abstract}

\newpage

\section{Introduction}

Principal component analysis (PCA) is a foundational spectral primitive for linear dimension reduction.  Given second-order information about a distribution, PCA chooses the fixed low-dimensional subspace that captures the most variance, or equivalently minimizes mean-square reconstruction error.  This exact optimality principle, originating in the classical works of \citet{Pearson1901} and \citet{Hotelling1933}, is the statistical counterpart of the Eckart--Young--Mirsky theorem for low-rank approximation \citep{EckartYoung1936,Mirsky1960,Jolliffe2002}.  It has made PCA a canonical benchmark across statistics, signal processing, numerical linear algebra, and algorithms.  Modern randomized numerical linear algebra has further turned PCA into a scalable algorithmic workhorse: randomized SVD and sketching give fast low-rank approximations \citep{HalkoMartinssonTropp2011,Woodruff2014}, while subspace iteration and block Krylov methods provide stronger guarantees for approximate principal components \citep{MuscoMusco2015,MartinssonTropp2020}.

The classical PCA guarantee is linear and non-adaptive: the subspace is chosen once, before seeing the sample, and every sample is projected onto that same subspace.  Many approximation and compression procedures use a more adaptive rule.  After expanding a signal in a basis, they keep the largest \(d\) coefficients of the particular realization and discard the rest.  This best \(d\)-term rule is a central object in nonlinear approximation \citep{DeVore1998,Temlyakov2008,Temlyakov2011}; in wavelet transform coding, large coefficients encode the dominant structure of a signal or image \citep{DeVoreJawerthLucier1992,Mallat2009}; and in sparse representation and recovery algorithms, thresholding or greedy selection is a basic mechanism for constructing sparse approximants \citep{MallatZhang1993,ChenDonohoSaunders2001,BlumensathDavies2009}.  The rule is nonlinear because the retained coordinates depend on the observed sample.  It is therefore a stronger benchmark than fixed-subspace projection: rather than asking for the best fixed \(d\)-dimensional approximation space, it asks for a coordinate system in which sample-dependent top-\(d\) selection captures as much signal energy as possible.

For Gaussian data, the natural candidate coordinate system is the Karhunen--Lo\`eve basis, namely the eigenbasis of the covariance matrix.  The Mallat--Zeitouni conjecture \citep{MallatZeitouni2011} asks whether this basis remains optimal even for the nonlinear best \(d\)-term approximation rule.  More concretely, their conjecture can be written as follows:

\begin{conjecture}[{\citet[Conjecture 1]{MallatZeitouni2011}}]
\label{conj:mz}
Let \(X=(X_1,\ldots,X_p)\) be a centered Gaussian vector with independent
coordinates and variances \(\lambda_1\ge\cdots\ge\lambda_p\).  Let
\(z_{[1]}^2\ge\cdots\ge z_{[p]}^2\) denote the decreasing rearrangement
of the squared coordinates of a vector \(z\). For every \(U\in O(p)\) and
\(1\le d<p\), we have
\begin{equation}
\label{eq:mz}
\mathbb E\sum_{j=d+1}^{p}X_{[j]}^2
\le
\mathbb E\sum_{j=d+1}^{p}(UX)_{[j]}^2 .
\end{equation}
\end{conjecture}

From an algorithmic viewpoint, the conjecture isolates a basic quantifier switch.  In classical PCA, one chooses a subspace before seeing the sample.  In nonlinear approximation, one chooses the basis first, then after seeing the sample selects the best \(d\) coordinates of that sample.  If the coordinate subset also had to be fixed before sampling, the top eigenspace would be optimal by the usual eigenvalue-majorization argument.  The Mallat--Zeitouni problem asks whether this spectral optimality survives when the subset is allowed to depend on the sample.  \citet{Bandeira2016} formulated the same question as a data-science open problem: choose a basis before observing \(g\sim N(0,\Sigma)\), and after observing \(g\) choose the \(d\) basis vectors whose span captures the most energy.

This viewpoint connects the conjecture to a basic algorithmic pattern: choose a representation from coarse statistical information, and then sparsify adaptively after seeing the input.  The second step is the same sample-dependent support selection that appears throughout sparse approximation and recovery.  For example, CoSaMP combines greedy support identification with pruning \citep{NeedellTropp2009}, while iterative hard thresholding makes the hard-thresholding step explicit \citep{BlumensathDavies2009}.  Sketching-based sparse recovery also measures performance against the best \(s\)-term approximation of the input, seeking fast linear measurements and decoders that compete with this adaptive benchmark \citep{GilbertIndyk2010,GilbertLiPoratStrauss2012}.  A related line of work treats the representation itself as something to be learned: provable dictionary learning asks when a sparsifying dictionary can be recovered from data \citep{SpielmanWangWright2012,AroraGeMoitra2014}.  The Mallat--Zeitouni problem strips these themes down to a particularly clean model: Gaussian data, known covariance, and orthonormal dictionaries.  It asks whether any rotation can improve over covariance diagonalization before the adaptive top-\(d\) selection step.  A positive answer would certify PCA as the optimal preprocessing not only for fixed linear projections, but also for this basic nonlinear postprocessing rule; a negative answer would show that sample-dependent sparse approximation can exploit rotations invisible to classical PCA.

\paragraph{Prior work and obstacles.}
A first obstacle is that nonlinear approximation can be much more powerful than its linear counterpart.  For any fixed basis, the best \(d\)-term error is at most the error of projecting onto any fixed \(d\) coordinates.  The gap can be substantial: \citet{LitvakTikhomirov2018} give a Gaussian example in which the linear \(d\)-term error is of order \(1\), while the nonlinear \(d\)-term error is of order \(d^{-2}\).  Thus \eqref{eq:mz} is not a formal extension of the PCA variational principle.  After an orthogonal rotation, the coordinates of \(UX\) are simultaneously dependent and non-identically distributed, and the objective is a sum of order statistics of their squares.  Classical order-statistic theory primarily treats independent samples, with extensions to independent non-identically distributed variables and to special dependence structures such as exchangeability or equicorrelation \citep{DavidNagaraja2003}.

\citet{MallatZeitouni2011} proved the conjecture in the case \(d=1\), namely reconstruction from the single largest projection.  Their proof combines a \v{S}id\'ak-type Gaussian correlation inequality with a Marshall--Proschan majorization argument.  The majorization part applies naturally to the sum of the largest \(d\) squared coordinates, but the required decoupling step fails.  More precisely, one would like to replace a dependent Gaussian vector by an independent Gaussian vector with the same coordinate variances and only increase the expected top-\(d\) retained energy.  Ramon van Handel communicated a three-dimensional counterexample, included in \citet{MallatZeitouni2011}, showing that this intermediate assertion is false already for \(p=3\) and \(d=2\).  Thus the conjecture cannot be proved by first discarding the Gaussian dependence structure and replacing the rotated vector by independent coordinates with the same marginal variances.

The main progress toward the original reconstruction-error conjecture is due to \citet{LitvakTikhomirov2018}, who proved \eqref{eq:mz} up to a universal multiplicative constant.  Their proof introduces a comparison theorem for sums of order statistics: it compares a vector with independent coordinates to a vector with arbitrary dependence under mild one-dimensional distributional assumptions.  This result is of independent interest and places the Mallat--Zeitouni problem in a broader line of work on Gaussian minima, dependent order statistics, and Orlicz-norm methods \citep{GordonLitvakSchuettWerner2005,GordonLitvakSchuettWerner2006,GordonLitvakSchuettWerner2012}.  The exact constant-one conjecture remains open; it also continues to motivate related extremal questions for Gaussian order statistics \citep{Bandeira2016,Litvak2018Simplex,Kunisky2026}.

\paragraph{Our result and proof sketch.}
Since orthogonal matrices preserve the Euclidean \(\ell_2\) norm of a vector,
\(\|UX\|_2=\|X\|_2\), Conjecture~\ref{conj:mz} can be equivalently
written as the retained-energy form:
\[
    \mathbb E\sum_{j=1}^{d}(UX)_{[j]}^2
    \le
    \mathbb E\sum_{j=1}^{d}X_{[j]}^2 .
\]
Our main theorem, \Cref{thm:main}, proves the following
\(1+O(d^{-1/2})\)-approximate version:
\[
    \mathbb E\sum_{j=1}^{d}(UX)_{[j]}^2
    \le
    \bigl(1+O(d^{-1/2})\bigr)
    \mathbb E\sum_{j=1}^{d}X_{[j]}^2 .
\]


The proof has two ingredients.  First, we upper bound the performance of every rotated basis by a deterministic threshold relaxation.  The sum of the largest \(d\) coordinates admits an exact threshold variational formula; moving the threshold outside the expectation gives a relaxation depending only on the coordinate variances in the chosen basis.  By the Schur--Horn theorem, this variance vector is majorized by the eigenvalue vector of \(\Sigma\), and by Karamata's inequality the resulting convex relaxation is maximized in the Karhunen--Lo\`eve basis.

The second step exposes the combinatorial structure hidden in adaptive top-\(d\) selection.  The uniform-matroid structure is already implicit in the adaptive top-\(d\) selection rule: at any fixed level, one may retain at most \(d\) coordinates above that level.  After a layer-cake decomposition over Gaussian level sets, the retained-energy objective in the Karhunen--Lo\`eve basis becomes an integral of the rank function \(S\mapsto \min(d,|S|)\) of the rank-\(d\) uniform matroid, evaluated on product Bernoulli random sets.  The deterministic threshold relaxation corresponds, level by level, to the largest value allowed by the same marginal probabilities, namely \(\min(d,\sum_i q_i)\).  Thus, after the threshold relaxation, the quantitative loss in our argument is precisely the uniform-matroid correlation gap, equivalently the balancedness of the corresponding contention-resolution scheme.

This connection brings tools from stochastic optimization and submodular rounding into the Mallat--Zeitouni problem.  Correlation gaps were introduced in correlation-robust stochastic optimization \citep{AgrawalDingSaberiYe2009} and used in mechanism design \citep{Yan2011}; contention-resolution schemes were developed as a rounding framework for submodular optimization \citep{ChekuriVondrakZenklusen2014}.  We use the sharp offline uniform-matroid constant of \citet{KashaevSantiago2023}.  Closely related online contention-resolution schemes for uniform matroids, motivated by prophet inequalities, were recently studied by \citet{DinevWeinberg2024}; our argument uses the offline correlation-gap form.  Recent work of \citet{HusicKohLohoVegh2022} develops a more fine-grained theory of matroid correlation gaps.

\paragraph{Discussion.}
Our result gives an explicit, dimension-free retained-energy comparison for the Karhunen--Lo\`eve basis under adaptive top-\(d\) truncation.  At the exact constant-one level this is equivalent to the original reconstruction-error conjecture, since the two objectives sum to \(\trace(\Sigma)\); for multiplicative approximations, however, the two formulations are not interchangeable.  Thus our result complements the universal-constant reconstruction-error theorem of \citet{LitvakTikhomirov2018}, giving a near-one guarantee for the retained-energy objective.  The argument reduces arbitrary rotations to their marginal variance vectors via Schur--Horn majorization, and then compares the resulting Gaussian level sets through the rank-\(d\) uniform-matroid correlation gap.  Within this marginal-variance relaxation, this gap is sharp.  This suggests that a full resolution of the original Mallat--Zeitouni conjecture will likely require ideas beyond this relaxation, in particular a way to capture how the joint Gaussian dependence created by rotation interacts with adaptive top-\(d\) selection.

\section{Preliminaries}
We write \([p]\coloneqq\{1,\ldots,p\}\) and
\(x_+\coloneqq\max\{x,0\}\).  
For
\(a=(a_1,\ldots,a_p)\in\RR_+^p\), let
$a_{[1]}\ge a_{[2]}\ge\cdots\ge a_{[p]}$
denote the non-increasing rearrangement of its coordinates.  For
\(1\le d\le p\), define
$\Top_d(a)\coloneqq \sum_{j=1}^d a_{[j]}$.
Equivalently,
    $\Top_d(a)=
    \max_{{S\subseteq[p], |S|=d}}
    \sum_{i\in S} a_i.$

Let \(g\sim N(0,\Sigma)\) in \(\RR^p\), where \(\Sigma\succeq0\).  An
orthonormal basis is written as \(V=(v_1,\ldots,v_p)\in O(p)\), with
\(v_i\) as columns.  Its nonlinear retained \(d\)-term energy is
\[
    \mathcal R_d(V;\Sigma)
    \coloneqq
    \mathbb E\,
    \Top_d\bigl((v_1^\top g)^2,\ldots,(v_p^\top g)^2\bigr).
\]
The best retained energy over all orthonormal bases is
\[
    \OPT_d(\Sigma)\coloneqq
    \sup_{V\in O(p)} \mathcal R_d(V;\Sigma).
\]
Intuitively, the optimization has two stages: the basis \(V\) is chosen before
the sample is observed, and the top \(d\) coordinates are chosen afterward.

Let $\lambda_1\ge\lambda_2\ge\cdots\ge\lambda_p\ge0$
be the eigenvalues of \(\Sigma\), including zero eigenvalues, and let
$ r\coloneqq \rank(\Sigma)=|\{i:\lambda_i>0\}|.$
Throughout, \(Z,Z_1,\ldots,Z_p\) denote independent standard Gaussian random
variables.  In a Karhunen--Lo\`eve eigenbasis, \(g\) has coordinates
    $(\sqrt{\lambda_1}Z_1,\ldots,\sqrt{\lambda_p}Z_p)$,
and we write
\[
    \KL_d(\Sigma)
    \coloneqq
    \mathbb E\,\Top_d
    \bigl(\lambda_1Z_1^2,\ldots,\lambda_pZ_p^2\bigr).
\]
Since a Karhunen--Lo\`eve eigenbasis is feasible in the definition of
\(\OPT_d(\Sigma)\), we always have
    $\KL_d(\Sigma)\le \OPT_d(\Sigma)$.

We use standard majorization notation.  For \(x,y\in\RR^p\), write
\(x\preceq y\) if, after sorting both vectors in non-increasing order,
\[
\sum_{i=1}^{\ell}x_{[i]}
\le
\sum_{i=1}^{\ell}y_{[i]},
\qquad 1\le \ell<p,
    \quad\text{and}\quad
    \sum_{i=1}^p x_i=\sum_{i=1}^p y_i .
\]
For \(1\le d\le r\), let \(\rho_d\) denote the rank function of the
rank-\(d\) uniform matroid on \([r]\):
\[
    \rho_d(S)\coloneqq \min(d,|S|),
    \qquad S\subseteq[r].
\]

\section{Main Theorem and Its Proof}


\begin{theorem}[Correlation-gap bound for nonlinear Gaussian PCA]
\label{thm:main}
Let \(1\le d\le p\), and let \(g\sim N(0,\Sigma)\) in \(\RR^p\).  Let
\(\lambda_1\ge\cdots\ge\lambda_p\ge0\) be the eigenvalues of \(\Sigma\), and
let \(r=\rank(\Sigma)\).  If \(r\le d\), then
\[
    \OPT_d(\Sigma)=\KL_d(\Sigma)=\trace(\Sigma).
\]
If \(r>d\), then
\[
    \KL_d(\Sigma)
    \le
    \OPT_d(\Sigma)
    \le
    c_{d,r}^{-1}\KL_d(\Sigma)
    \le
    \gamma_d^{-1}\KL_d(\Sigma),
\]
where
\[
    c_{d,r}
    \coloneqq
    1-
    \binom r d
    \left(\frac dr\right)^d
    \left(1-\frac dr\right)^{r+1-d},
\]
and $\gamma_d\coloneqq 1-e^{-d}\frac{d^d}{d!}$.
Moreover,
\[
    \gamma_d^{-1}
    =
    1+\frac{1}{\sqrt{2\pi d}}+O(d^{-1})
    =
    1+O(d^{-1/2}).
\]
\end{theorem}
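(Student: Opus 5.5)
The plan follows the two-step outline from the introduction. The case \(r\le d\) is immediate: in any basis \(V\), the vector \(V^\top g\) lies in a rank-\(r\) subspace. More simply, \(\Top_d\) of the squared coordinates is at most \(\|g\|^2\), so \(\OPT_d\le\trace(\Sigma)\). In the Karhunen--Lo\`eve basis at most \(r\le d\) coordinates are nonzero, so \(\Top_d\) captures all the energy and \(\KL_d=\trace(\Sigma)\).

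For \(r>d\), the upper bound on \(\OPT_d\) comes from the threshold variational formula \(\Top_d(a)=\min_{t\ge0}\bigl(dt+\sum_i(a_i-t)_+\bigr)\). Fix \(V\) and let \(\sigma_i=v_i^\top\Sigma v_i\). Exchanging \(\mathbb E\) and \(\min_t\) gives
\[
\mathcal R_d(V;\Sigma)\le \min_{t\ge0}\Bigl(dt+\sum_i \phi_t(\sigma_i)\Bigr),\qquad \phi_t(s)\coloneqq\mathbb E(sZ^2-t)_+ .
\]
Each \(\phi_t\) is convex in \(s\), since it is an expectation of convex functions. By Schur--Horn, \(\sigma\preceq\lambda\), so Karamata gives \(\sum_i\phi_t(\sigma_i)\le\sum_i\phi_t(\lambda_i)\) for every \(t\). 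Hence \(\OPT_d(\Sigma)\le B\), where \(B\coloneqq\min_t\bigl(dt+\sum_i\mathbb E(\lambda_iZ_i^2-t)_+\bigr)\). Only the first \(r\) coordinates matter here.

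It remains to show \(\KL_d\ge c_{d,r}B\), which is the correlation-gap step. Write \(Y_i=\lambda_iZ_i^2\) for \(i\le r\) and use the layer-cake formula:
\[
\KL_d=\int_0^\infty \mathbb E\,\rho_d(\{i:Y_i>s\})\,ds .
\]
Similarly, expanding \(\sum_i(Y_i-t)_+\) gives \(dt+\sum_i\mathbb E(Y_i-t)_+=\int_0^\infty\bigl(d\,\mathbf 1[s<t]+\mathbf 1[s\ge t]\sum_iq_i(s)\bigr)ds\), where \(q_i(s)=\Pr(Y_i>s)\). Choose \(t\) to be the level where \(\sum_iq_i(t)=d\). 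This level exists and is unique because \(\sum_i q_i\) decreases continuously from \(r>d\) to \(0\). Then \(B\) is at most \(\int_0^\infty\min(d,\sum_iq_i(s))\,ds\); in fact this choice of \(t\) makes the two equal. The sets \(\{i:Y_i>s\}\) are product Bernoulli with marginals \(q_i(s)\), so pointwise in \(s\) I would invoke the sharp correlation-gap inequality for the rank-\(d\) uniform matroid on \(r\) elements (Kashaev--Santiago):
\[
\mathbb E\,\min(d,|S|)\ge c_{d,r}\min\Bigl(d,\sum_i q_i\Bigr).
\]
Integrating over \(s\) gives \(\KL_d\ge c_{d,r}B\ge c_{d,r}\OPT_d\).

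I expect the main obstacle to be invoking the correlation gap with exactly the constant \(c_{d,r}\) for \(r\) elements. One needs the extremal case to be uniform marginals \(q_i=d/r\); for \(\sum q_i<d\) the bound should follow by scaling or monotonicity. Then \(\mathbb E\min(d,\mathrm{Bin}(r,d/r))/d\) must equal \(c_{d,r}\) via the identity \(\mathbb E(\mathrm{Bin}-d)_+ - \ldots\). The identity to use is \(d-\mathbb E\min(d,X)=\mathbb E(d-X)_+\), which evaluates to \(d\binom rd(d/r)^d(1-d/r)^{r+1-d}\) for \(X\sim\mathrm{Bin}(r,d/r)\); this is a standard binomial telescoping identity, and I would check it separately.

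Finally, I would show \(c_{d,r}\) is decreasing in \(r\) with limit \(\gamma_d\), the Poisson\((d)\) analogue. Then Stirling's formula, \(e^{-d}d^d/d!=(2\pi d)^{-1/2}(1+O(1/d))\), gives \(\gamma_d^{-1}=1+(2\pi d)^{-1/2}+O(d^{-1})\). The monotonicity in \(r\) is the binomial-to-Poisson convex-ordering fact (binomials with a fixed mean increase in convex order as \(r\) grows). I would cite it or verify it directly from the closed form.
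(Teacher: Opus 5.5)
Your proposal is correct and is essentially the paper's proof: the threshold identity with Schur--Horn and Karamata gives $\OPT_d(\Sigma)\le\mathsf U_d(\lambda)$, and the layer-cake formula for $\KL_d(\Sigma)$ combined level by level with the Kashaev--Santiago uniform-matroid bound, evaluated at the level where $\sum_i q_i=d$, gives $\KL_d(\Sigma)\ge c_{d,r}\,\mathsf U_d(\lambda)$. The differences are minor: you cite the correlation-gap form for all $q\in[0,1]^r$ directly, whereas the paper derives it from the balanced CRS when $\sum_i q_i\le d$ plus a monotone coupling when $\sum_i q_i>d$; and you justify $c_{d,r}\ge\gamma_d$ yourself, via the binomial mean-deviation identity and the convex ordering of $\mathrm{Bin}(r,d/r)$ in $r$ (Hoeffding), where the paper simply cites Kashaev--Santiago.
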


The proof separates the analytic and combinatorial parts of the problem.
First, every rotated basis is upper bounded by a deterministic threshold
relaxation that depends only on the diagonal variances of that basis.  The
Schur--Horn theorem and Karamata's inequality then show that this relaxation
is maximized by the eigenvalue vector.  Second, in the Karhunen--Lo\`eve
basis, a layer-cake decomposition rewrites the retained energy as an integral
of the rank function \(\rho_d\) over independent Bernoulli level sets.  The
loss between the threshold relaxation and the independent Gaussian level sets
is exactly the rank-\(d\) uniform-matroid correlation gap.

For the convenience of the proof, for \(\lambda\in\RR_+^p\), we define the threshold relaxation 
\[
    \mathsf U_d(\lambda)
    \coloneqq
    \inf_{\tau\ge0}
    \left\{
        d\tau+
        \sum_{i=1}^p
        \mathbb E(\lambda_iZ_i^2-\tau)_+
    \right\}.
\]

\subsection{A deterministic threshold relaxation}

We begin with a pointwise identity for the sum of the largest \(d\)
coordinates.  It is the analytic step that lets us replace the adaptive
top-\(d\) choice by a one-dimensional threshold.

\begin{lemma}[Top-\(d\) threshold identity]
\label{lem:threshold-identity}
For every \(a\in\RR_+^p\) and every \(1\le d\le p\),
\[
    \Top_d(a)
    =
    \inf_{\tau\ge0}
    \left\{
        d\tau+\sum_{i=1}^p(a_i-\tau)_+
    \right\}.
\]
\end{lemma}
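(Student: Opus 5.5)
We prove the identity by a two-sided comparison: a pointwise lower bound valid for every threshold \(\tau\ge0\), and then a choice of \(\tau\) at which equality holds. Write \(a_{[1]}\ge\cdots\ge a_{[p]}\ge0\) for the sorted coordinates and set \(\phi(\tau)\coloneqq d\tau+\sum_{i=1}^p(a_i-\tau)_+\).

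For the lower bound, fix any \(\tau\ge0\) and let \(S\subseteq[p]\) with \(|S|=d\) be an index set attaining \(\Top_d(a)=\sum_{i\in S}a_i\). For each \(i\in S\) we have \(a_i\le \tau+(a_i-\tau)_+\). Summing over \(S\) gives
\[
\Top_d(a)\le d\tau+\sum_{i\in S}(a_i-\tau)_+\le d\tau+\sum_{i=1}^p(a_i-\tau)_+ = \phi(\tau).
\]
The last inequality holds because the dropped terms \((a_i-\tau)_+\), \(i\notin S\), are nonnegative. Taking the infimum over \(\tau\ge0\) yields \(\Top_d(a)\le\inf_{\tau\ge0}\phi(\tau)\).

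For the reverse inequality, take \(\tau^\ast\coloneqq a_{[d]}\), which is admissible since \(a\in\RR_+^p\). The terms \((a_i-\tau^\ast)_+\) can be positive only for coordinates strictly larger than \(a_{[d]}\), and there are at most \(d-1\) of these. Summing over the \(d\) largest coordinates therefore loses nothing:
\[
\sum_{i=1}^p(a_i-\tau^\ast)_+=\sum_{j=1}^d(a_{[j]}-a_{[d]}) .
\]
With ties one should check that indices with \(a_i=a_{[d]}\) contribute zero, which they do. Hence \(\phi(\tau^\ast)=d\,a_{[d]}+\sum_{j=1}^d a_{[j]}-d\,a_{[d]}=\Top_d(a)\), so the infimum is attained and equals \(\Top_d(a)\).

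The only point needing care is the tie-breaking in the second step; it is handled by choosing the threshold exactly at the \(d\)-th largest value, so that every coordinate above it lies among the top \(d\). Nonnegativity of \(a\) is used only to make \(\tau^\ast\ge0\) admissible. The case \(d=p\) is also covered: there \(\tau^\ast=a_{[p]}\), and the identity reduces to \(\sum_i a_i\).
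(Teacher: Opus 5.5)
Your proof is correct and follows the paper's argument: the same lower bound $\Top_d(a)\le d\tau+\sum_{i}(a_i-\tau)_+$ for every $\tau\ge0$, followed by exhibiting a threshold at which equality holds. Your choice $\tau^\ast=a_{[d]}$ is the right endpoint of the paper's interval $[a_{[d+1]},a_{[d]}]$, and it also covers $d=p$ directly, so you do not need the paper's separate convention $\tau=0$ for that case.
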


\begin{proof}
Let
\[
    F(\tau)\coloneqq
    d\tau+\sum_{i=1}^p(a_i-\tau)_+ .
\]
For every \(\tau\ge0\),
\[
    F(\tau)
    \ge
    d\tau+\sum_{j=1}^d(a_{[j]}-\tau)
    =
    \Top_d(a).
\]
Conversely, put \(a_{[p+1]}=0\) and choose
\[
    \tau\in [a_{[d+1]},a_{[d]}],
\]
with \(\tau=0\) when \(d=p\).  Then
\[
    (a_{[j]}-\tau)_+=a_{[j]}-\tau
    \quad\text{for }j\le d,
    \qquad
    (a_{[j]}-\tau)_+=0
    \quad\text{for }j>d.
\]
Hence \(F(\tau)=\Top_d(a)\).  This proves the identity.
\end{proof}

\begin{lemma}[Threshold relaxation dominates every basis]
\label{lem:threshold-relaxation-dominates}
For every orthonormal basis \(V=(v_1,\ldots,v_p)\),
\[
    \mathcal R_d(V;\Sigma)
    \le
    \mathsf U_d(\lambda).
\]
Consequently,
\[
    \OPT_d(\Sigma)\le \mathsf U_d(\lambda).
\]
\end{lemma}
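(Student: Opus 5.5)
Fix an orthonormal basis $V=(v_1,\ldots,v_p)$ and write $Y_i\coloneqq v_i^\top g$. Each $Y_i$ is a centered Gaussian with variance $\sigma_i\coloneqq v_i^\top\Sigma v_i$, so $Y_i^2$ has the same law as $\sigma_iZ^2$. The correlations between the $Y_i$ will play no role.

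The first step applies \Cref{lem:threshold-identity} pointwise to the random vector $a=(Y_1^2,\ldots,Y_p^2)$. For any fixed deterministic $\tau\ge0$ we get $\Top_d(a)\le d\tau+\sum_i(Y_i^2-\tau)_+$. Taking expectations and using only the marginal laws gives
\[
\mathcal R_d(V;\Sigma)\le d\tau+\sum_{i=1}^p\mathbb E(\sigma_iZ_i^2-\tau)_+ .
\]
Taking the infimum over $\tau$ (inf of expectation dominates expectation of inf) yields $\mathcal R_d(V;\Sigma)\le\mathsf U_d(\sigma)$, where $\sigma=(\sigma_1,\ldots,\sigma_p)$.

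The second step compares $\mathsf U_d(\sigma)$ with $\mathsf U_d(\lambda)$. By the Schur--Horn theorem, the diagonal of $V^\top\Sigma V$ is majorized by the eigenvalues: $\sigma\preceq\lambda$. For each fixed $\tau\ge0$, the function $\phi_\tau(s)\coloneqq\mathbb E(sZ^2-\tau)_+$ is convex in $s\ge0$, being an expectation of convex functions of $s$. Karamata's inequality, in its majorization form for separable convex sums, then gives
\[
\sum_i\phi_\tau(\sigma_i)\le\sum_i\phi_\tau(\lambda_i).
\]
So for every $\tau$ the objective for $\sigma$ is at most that for $\lambda$, and taking infima yields $\mathsf U_d(\sigma)\le\mathsf U_d(\lambda)$. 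Combining with the first step gives $\mathcal R_d(V;\Sigma)\le\mathsf U_d(\lambda)$. Taking the supremum over $V$ gives $\OPT_d(\Sigma)\le\mathsf U_d(\lambda)$.

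The only delicate point is the convexity and Karamata step. We must make sure the majorization is used in the direction where the convex sum increases, and that $\phi_\tau$ is finite. Finiteness follows from $\phi_\tau(s)\le s$; convexity holds because $s\mapsto(sz^2-\tau)_+$ is convex for each fixed $z$. The rest is routine.
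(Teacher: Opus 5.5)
Your proof is correct and follows the paper's argument essentially step for step. You apply Lemma~\ref{lem:threshold-identity} pointwise, take expectations before minimizing over $\tau$, and then use Schur's majorization $\sigma\preceq\lambda$ for the diagonal of $V^\top\Sigma V$ together with Karamata's inequality for the convex function $s\mapsto\mathbb E(sZ^2-\tau)_+$ at each fixed $\tau$. Your extra checks on finiteness and on the direction of the majorization are slightly more explicit than the paper's write-up.
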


\begin{proof}
Fix \(V=(v_1,\ldots,v_p)\in O(p)\), and write
\[
    Y_i=v_i^\top g,
    \qquad
    \mu_i=\operatorname{Var}(Y_i)=v_i^\top\Sigma v_i .
\]
By Lemma~\ref{lem:threshold-identity}, for every \(\tau\ge0\),
\[
    \Top_d(Y_1^2,\ldots,Y_p^2)
    \le
    d\tau+\sum_{i=1}^p (Y_i^2-\tau)_+ .
\]
Taking expectations and then minimizing over \(\tau\) gives
\[
    \mathcal R_d(V;\Sigma)
    \le
    \inf_{\tau\ge0}
    \left\{
        d\tau+\sum_{i=1}^p
        \mathbb E(Y_i^2-\tau)_+
    \right\}.
\]
Since \(Y_i\sim N(0,\mu_i)\),
\[
    \mathbb E(Y_i^2-\tau)_+
    =
    h_\tau(\mu_i),
    \qquad
    h_\tau(u)\coloneqq \mathbb E(uZ^2-\tau)_+ .
\]
For fixed \(\tau\), the function \(h_\tau:\RR_+\to\RR_+\) is convex, because
\(u\mapsto (uZ^2-\tau)_+\) is convex for every value of \(Z\).  The vector
\(\mu=(\mu_1,\ldots,\mu_p)\) is the diagonal of \(V^\top\Sigma V\), and hence
\(\mu\preceq\lambda\) by Schur's Theorem (see~\citet[Exercise II.1.12]{bhatiaMatrixAnalysis1997}).  Karamata's inequality (see~\citet[Theorem II.3.1]{bhatiaMatrixAnalysis1997}) yields
\[
    \sum_{i=1}^p h_\tau(\mu_i)
    \le
    \sum_{i=1}^p h_\tau(\lambda_i)
    \qquad
    \text{for every }\tau\ge0.
\]
Therefore
\[
\begin{aligned}
    \mathcal R_d(V;\Sigma)
    &\le
    \inf_{\tau\ge0}
    \left\{
        d\tau+\sum_{i=1}^p h_\tau(\mu_i)
    \right\}                                                \\
    &\le
    \inf_{\tau\ge0}
    \left\{
        d\tau+\sum_{i=1}^p h_\tau(\lambda_i)
    \right\}
    =
    \mathsf U_d(\lambda).
\end{aligned}
\]
Taking the supremum over \(V\in O(p)\) proves the consequence.
\end{proof}

\subsection{Gaussian layers and the uniform-matroid correlation gap}

The preceding relaxation has a clean interpretation.  At each level \(t\),
the Karhunen--Lo\`eve basis produces an independent random set of coordinates
whose squared values exceed \(t\).  The retained-energy objective integrates
the rank function \(\rho_d(S)=\min(d,|S|)\) over these level sets.  The
threshold relaxation instead integrates the largest value that any coupling
with the same one-dimensional marginals could have, namely
\(\min(d,\sum_i q_i)\).  The comparison is therefore precisely a correlation
gap for the rank-\(d\) uniform matroid.

We use the following sharp offline form.

\begin{lemma}[Uniform-matroid Bernoulli bound]
\label{lem:uniform-matroid-bernoulli}
Let \(I_1,\ldots,I_r\) be independent Bernoulli random variables with
\(\mathbb P(I_i=1)=q_i\), and let \(N=\sum_{i=1}^r I_i\).  If \(1\le d<r\),
then
\[
    \mathbb E\min(d,N)
    \ge
    c_{d,r}
    \min\left(d,\sum_{i=1}^r q_i\right).
\]
\end{lemma}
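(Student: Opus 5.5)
We prove the lemma in three steps: reduce to the case $\sum_i q_i\le d$, reduce further to equal $q_i$ by a convexity argument, and finish with a one-variable computation. Throughout write $s\coloneqq\sum_{i=1}^r q_i$ and $\Phi(q)\coloneqq \mathbb E\min(d,N)$.

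\emph{Reduction to $s\le d$.} The map $q\mapsto\Phi(q)$ is nondecreasing in each coordinate: couple $I_i=\mathbf 1\{U_i\le q_i\}$ with independent uniforms $U_i$, and note that $\min(d,\cdot)$ is monotone. Suppose $s>d$. Decrease the $q_i$ continuously until their sum equals $d$. This lowers the left-hand side, and the right-hand side stays at $c_{d,r}\,d$. So it suffices to prove $\Phi(q)\ge c_{d,r}\,s$ whenever $s\le d$.

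\emph{Symmetrization.} Write $\Phi(q)=s-\mathbb E(N-d)_+$. We must show $\mathbb E(N-d)_+\le (1-c_{d,r})\,s$. Fix $s$ and all coordinates except $q_1,q_2$, keeping $q_1+q_2$ constant. Let $M=\sum_{i\ge3}I_i$. Then $\mathbb E(N-d)_+$ is affine in $q_1q_2$:
\[
\mathbb E(N-d)_+ = A + q_1q_2\,\mathbb E\bigl[(M+2-d)_+-2(M+1-d)_++(M-d)_+\bigr],
\]
where $A$ depends only on $q_1+q_2$ and the other coordinates. The bracket is a second difference of the convex function $x\mapsto(x-d)_+$, so it is nonnegative. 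Hence $\mathbb E(N-d)_+$ increases as $q_1q_2$ increases, that is, as $q_1$ and $q_2$ move toward their average. Repeated averaging (or compactness plus strictness at a maximizer) shows that for fixed $s$ the excess is maximized at $q_i=s/r$ for all $i$. Thus, with $N\sim\mathrm{Bin}(r,x)$ and $x=s/r\le d/r$, we reduce to showing
\[
\frac{\mathbb E(N-d)_+}{rx}\le 1-c_{d,r}.
\]

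\emph{One-variable computation.} Let $f(x)=\mathbb E(N-d)_+/(rx)$. The identity $\mathbb E[N\mathbf 1\{N\ge k\}]=rx\,\mathbb P(N'\ge k-1)$ with $N'\sim\mathrm{Bin}(r-1,x)$ gives
\[
f(x)=\mathbb P(N'\ge d)-\frac{d}{rx}\,\mathbb P(N\ge d+1).
\]
Equivalently, $f(x)=\mathbb E(N-d)_+/\mathbb E N$. Since $(n-d)_+/n$ is nondecreasing in $n$ and binomials are stochastically increasing in $x$ (a size-bias argument), $f$ is nondecreasing in $x$. It therefore suffices to evaluate $f$ at the endpoint $x=d/r$.

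At $x=d/r$ we need the clean formula
\[
\mathbb E(N-d)_+=d\binom rd\left(\frac dr\right)^d\left(1-\frac dr\right)^{r+1-d}.
\]
Dividing by $rx=d$ then gives $1-c_{d,r}$. This is the classical mean-absolute-deviation identity for a binomial at its integer mean (De Moivre): $\mathbb E|N-rx|=2\,\lceil rx\rceil\binom{r}{\lceil rx\rceil}x^{\lceil rx\rceil}(1-x)^{r-\lceil rx\rceil+1}$, and at the mean $\mathbb E(N-d)_+=\tfrac12\mathbb E|N-d|$. We will verify it by telescoping $\sum_{k>d}(k-d)\binom rk x^k(1-x)^{r-k}$ with $k\binom rk=r\binom{r-1}{k-1}$.

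\emph{Main obstacle.} We expect the hardest step to be the monotonicity of $f$ on $(0,d/r]$; the size-bias argument needs care. A fallback is to cite the sharp uniform-matroid correlation gap of \citet{KashaevSantiago2023}, which states exactly this inequality with the constant $c_{d,r}$.
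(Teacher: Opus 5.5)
Your proof is correct, but it takes a genuinely different route from the paper's. The paper shares your first step, the monotone coupling that reduces to $\sum_i q_i=d$. For the case $\sum_i q_i\le d$, however, it simply invokes the optimal $c_{d,r}$-balanced contention-resolution scheme of Kashaev and Santiago. That scheme keeps at most $d$ elements of $R(q)$ and retains each $i$ with conditional probability at least $c_{d,r}$, so $\mathbb E\min(d,|R(q)|)\ge c_{d,r}\sum_i q_i$ follows in one line.

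You instead prove this correlation-gap inequality from scratch, and each step checks out:
\begin{itemize}
\item In the symmetrization, the coefficient of $q_1q_2$ is exactly $\mathbb P(M=d-1)\ge 0$. This is Hoeffding's convex-order comparison with the binomial.
\item The size-bias identity gives $f(x)=\mathbb E\bigl(1-d/(1+N')\bigr)_+$ with $N'\sim\mathrm{Bin}(r-1,x)$. So the monotonicity you flagged as the main obstacle is immediate.
\item At $x=d/r$ one gets $\mathbb E(N-d)_+=d\binom{r-1}{d}x^d(1-x)^{r-d}=d\binom{r}{d}x^d(1-x)^{r+1-d}$, hence $f(d/r)=1-c_{d,r}$.
\end{itemize}

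What each route buys:
\begin{itemize}
\item The paper's argument is shorter and yields more than the lemma needs: a per-element balanced scheme gives the same bound for every weighted rank function. The cost is reliance on an external black box.
\item Your argument is self-contained and exhibits the extremal configuration $q_i\equiv d/r$ directly.
\item Padding with zero coordinates and rerunning your symmetrization also shows that $c_{d,r}$ is nonincreasing in $r$, with Poisson limit $\gamma_d=1-e^{-d}d^d/d!$. That is the other fact the paper imports from Kashaev and Santiago in the proof of the main theorem.
\end{itemize}

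One small fix: the parenthetical ``compactness plus strictness at a maximizer'' does not work as stated, because $\mathbb P(M=d-1)$ can vanish. Use repeated averaging instead, or take a maximizer that minimizes $\sum_i q_i^2$.
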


\begin{proof}
First assume that \(\sum_i q_i\le d\).  Let \(R(q)\subseteq[r]\) be the
random set obtained by including \(i\) independently with probability \(q_i\).
Then \(q\) lies in the matroid polytope of the rank-\(d\) uniform matroid
\(U_d^r\).  By the optimal \(c_{d,r}\)-balanced contention-resolution scheme
of \citet[Theorem 2.1]{KashaevSantiago2023}, there is a possibly randomized rule
\(\pi_q\) such that
\[
    \pi_q(R(q))\subseteq R(q),
    \qquad
    |\pi_q(R(q))|\le d,
\]
and
\[
    \mathbb P(i\in \pi_q(R(q))\mid i\in R(q))
    \ge c_{d,r}
    \qquad
    \text{for every }i\in[r].
\]
Hence
\[
\begin{aligned}
    \mathbb E\min(d,|R(q)|)
    &\ge
    \mathbb E|\pi_q(R(q))|                                      \\
    &=
    \sum_{i=1}^r
    \mathbb P(i\in \pi_q(R(q)))                                  \\
    &\ge
    c_{d,r}\sum_{i=1}^r q_i .
\end{aligned}
\]

Now suppose \(\sum_iq_i>d\).  Choose \(q_i'\le q_i\) such that
\(\sum_iq_i'=d\).  Couple the two Bernoulli vectors by independent uniforms
\(U_i\sim\operatorname{Unif}[0,1]\):
\[
    I_i'=\mathbf 1_{\{U_i\le q_i'\}},
    \qquad
    I_i=\mathbf 1_{\{U_i\le q_i\}} .
\]
Then \(I_i'\le I_i\) almost surely.  Applying the first case to \(q'\),
\[
    \mathbb E\min\left(d,\sum_i I_i\right)
    \ge
    \mathbb E\min\left(d,\sum_i I_i'\right)
    \ge
    c_{d,r}d .
\]
Combining the two cases proves the lemma.
\end{proof}

\begin{lemma}[Layer-cake comparison for the KL basis]
\label{lem:layer-cake-kl}
If \(r>d\), then
\[
    \KL_d(\Sigma)
    \ge
    c_{d,r}\,\mathsf U_d(\lambda).
\]
\end{lemma}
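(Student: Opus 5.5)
The plan is to write both sides as integrals over levels $t\ge0$ and compare them level by level using Lemma~\ref{lem:uniform-matroid-bernoulli}. Only the $r$ coordinates with $\lambda_i>0$ matter: for $i>r$ we have $\lambda_iZ_i^2=0$ and $h_\tau(\lambda_i)=0$, so both $\KL_d(\Sigma)$ and $\mathsf U_d(\lambda)$ can be computed over $[r]$. For $t\ge0$, set $I_i(t)=\mathbf 1_{\{\lambda_iZ_i^2>t\}}$ and $q_i(t)=\mathbb P(\lambda_iZ_i^2>t)$ for $i\le r$. The $I_i(t)$ are independent Bernoulli variables, and $N(t)=\sum_{i\le r}I_i(t)$ is the number of coordinates exceeding $t$.

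The first step is the layer-cake formula for the Karhunen--Lo\`eve side. For a nonnegative vector $a$, the $j$-th largest entry satisfies $a_{[j]}=\int_0^\infty \mathbf 1_{\{a_{[j]}>t\}}\,dt$. Summing over $j\le d$ gives
\[
    \Top_d(a)=\int_0^\infty \min\bigl(d,|\{i:a_i>t\}|\bigr)\,dt .
\]
Taking expectations and applying Tonelli then gives
\[
    \KL_d(\Sigma)=\int_0^\infty \mathbb E\min(d,N(t))\,dt=\int_0^\infty \mathbb E\rho_d(\{i:I_i(t)=1\})\,dt .
\]

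The second step is an upper bound on $\mathsf U_d(\lambda)$ of the form $\int_0^\infty\min(d,\sum_i q_i(t))\,dt$. Here $\mathbb E(\lambda_iZ_i^2-\tau)_+=\int_\tau^\infty q_i(t)\,dt$. Let $Q(t)=\sum_{i\le r}q_i(t)$, which is continuous and nonincreasing in $t$. Since $r>d$, we have $Q(0)=r>d$ and $Q(t)\to0$ as $t\to\infty$. Choose $\tau^*$ with $Q(\tau^*)=d$, so that $Q\ge d$ on $[0,\tau^*]$ and $Q\le d$ beyond $\tau^*$. Then
\[
    \mathsf U_d(\lambda)\le d\tau^*+\int_{\tau^*}^\infty Q(t)\,dt=\int_0^\infty\min(d,Q(t))\,dt .
\]
This holds with equality, but the inequality is all that is needed. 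Any $\tau$ would also give $d\tau+\int_\tau^\infty Q\ge\int\min(d,Q)$, though that direction is not required.

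The last step applies Lemma~\ref{lem:uniform-matroid-bernoulli} pointwise in $t$, using $1\le d<r$. This gives $\mathbb E\min(d,N(t))\ge c_{d,r}\min(d,Q(t))$ for every $t$, and integrating yields $\KL_d(\Sigma)\ge c_{d,r}\int_0^\infty\min(d,Q(t))\,dt\ge c_{d,r}\mathsf U_d(\lambda)$.

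The main points needing care are the choice of $\tau^*$ and the convergence of the integrals. $Q$ is continuous because each $\lambda_iZ_i^2$ with $\lambda_i>0$ has a continuous distribution. All integrals are finite because $\int_0^\infty Q=\trace(\Sigma)<\infty$. If some $q_i(t)=1$ the lemma still applies, since it allows arbitrary $q_i\in[0,1]$; this is not an issue in any case, because $q_i(t)<1$ for $t>0$. Beyond these, the argument is a direct combination of the layer-cake identity, the threshold formula, and the lemma.
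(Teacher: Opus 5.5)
Your proposal is correct and follows the paper's proof essentially step for step: the layer-cake identity for $\KL_d(\Sigma)$, the level-wise application of Lemma~\ref{lem:uniform-matroid-bernoulli}, and rewriting $\mathsf U_d(\lambda)$ via a threshold $\tau^*$ with $\sum_i q_i(\tau^*)=d$. The only difference is that the paper proves the infimum is attained at $\tau^*$ (equality), whereas you correctly observe that evaluating at $\tau^*$ already gives the upper bound $\mathsf U_d(\lambda)\le\int_0^\infty\min(d,Q(t))\,dt$, which is all the lemma needs.
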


\begin{proof}
The zero eigenvalues contribute neither to \(\KL_d(\Sigma)\) nor to
\(\mathsf U_d(\lambda)\), so we restrict attention to
\(i=1,\ldots,r\).  Put
\[
    B_i=\lambda_i Z_i^2 .
\]
For \(t\ge0\), define
\[
    q_i(t)=\mathbb P(B_i>t),
    \qquad
    m(t)=\sum_{i=1}^r q_i(t),
    \qquad
    N_t=\#\{i:B_i>t\}.
\]
For deterministic \(b=(b_1,\ldots,b_r)\in\RR_+^r\),
\[
    \Top_d(b)=
    \int_0^\infty
    \min\bigl(d,\#\{i:b_i>t\}\bigr)\,dt .
\]
Applying this identity to \(B=(B_1,\ldots,B_r)\) and using Tonelli's theorem,
\[
    \KL_d(\Sigma)
    =
    \int_0^\infty
    \mathbb E\min(d,N_t)\,dt .
\]
For each fixed \(t\), \(N_t\) is a sum of independent Bernoulli random
variables with means \(q_1(t),\ldots,q_r(t)\).  Lemma~\ref{lem:uniform-matroid-bernoulli}
therefore gives
\[
    \mathbb E\min(d,N_t)
    \ge
    c_{d,r}\min(d,m(t)).
\]
Thus
\[
    \KL_d(\Sigma)
    \ge
    c_{d,r}
    \int_0^\infty \min(d,m(t))\,dt .
\]

It remains to identify the last integral with the threshold relaxation.
Since
\[
    \mathbb E(B_i-\tau)_+
    =
    \int_\tau^\infty \mathbb P(B_i>t)\,dt,
\]
we have
\[
    \mathsf U_d(\lambda)
    =
    \inf_{\tau\ge0}
    \left\{
        d\tau+
        \int_\tau^\infty m(t)\,dt
    \right\}.
\]
The function \(m\) is continuous and non-increasing, with
\(m(0)=r>d\) and \(m(t)\to0\) as \(t\to\infty\).  Hence there exists
\(\tau_*\ge0\) such that \(m(\tau_*)=d\).  For \(\tau<\tau_*\),
\[
\begin{aligned}
    d\tau+\int_\tau^\infty m(t)\,dt
    -
    \left(
        d\tau_*+\int_{\tau_*}^\infty m(t)\,dt
    \right)
    &=
    \int_\tau^{\tau_*} (m(t)-d)\,dt
    \ge0,
\end{aligned}
\]
and for \(\tau>\tau_*\),
\[
\begin{aligned}
    d\tau+\int_\tau^\infty m(t)\,dt
    -
    \left(
        d\tau_*+\int_{\tau_*}^\infty m(t)\,dt
    \right)
    &=
    \int_{\tau_*}^{\tau} (d-m(t))\,dt
    \ge0.
\end{aligned}
\]
Therefore the infimum is attained at \(\tau_*\), and
\[
    \mathsf U_d(\lambda)
    =
    d\tau_*+\int_{\tau_*}^\infty m(t)\,dt
    =
    \int_0^\infty \min(d,m(t))\,dt .
\]
Combining this identity with the preceding lower bound on \(\KL_d(\Sigma)\)
proves the lemma.
\end{proof}
\subsection{Proof of the theorem and size of the constant}
\label{subsec:proof-main}

\begin{proof}[Proof of Theorem~\ref{thm:main}]
If \(r\le d\), then the Karhunen--Lo\`eve basis retains all nonzero
coordinates, and hence
\(\KL_d(\Sigma)=\mathbb E\|g\|_2^2=\trace(\Sigma)\).  On the other hand, for
every orthonormal basis \(V=(v_1,\ldots,v_p)\), pointwise
\[
    \Top_d\bigl((v_1^\top g)^2,\ldots,(v_p^\top g)^2\bigr)
    \le
    \sum_{i=1}^p (v_i^\top g)^2
    =
    \|g\|_2^2 .
\]
Taking expectations gives \(\OPT_d(\Sigma)\le\trace(\Sigma)\).  Since
\(\KL_d(\Sigma)\le\OPT_d(\Sigma)\), the equality
\(\OPT_d(\Sigma)=\KL_d(\Sigma)=\trace(\Sigma)\) follows.

Assume now that \(r>d\).  The Karhunen--Lo\`eve basis is feasible in the
definition of \(\OPT_d(\Sigma)\), so \(\KL_d(\Sigma)\le\OPT_d(\Sigma)\).
Lemma~\ref{lem:threshold-relaxation-dominates} gives
\(\OPT_d(\Sigma)\le\mathsf U_d(\lambda)\), while
Lemma~\ref{lem:layer-cake-kl} gives
\(\mathsf U_d(\lambda)\le c_{d,r}^{-1}\KL_d(\Sigma)\).  Therefore
\[
    \KL_d(\Sigma)
    \le
    \OPT_d(\Sigma)
    \le
    \mathsf U_d(\lambda)
    \le
    c_{d,r}^{-1}\KL_d(\Sigma).
\]
It remains only to pass from the finite-rank constant \(c_{d,r}\) to the
dimension-free constant \(\gamma_d\).  By the analysis of
\citet{KashaevSantiago2023}, the uniform-matroid balancedness constant
\(c_{d,r}\) converges from above, as \(r\to\infty\) with \(d\) fixed, to
\(\gamma_d=1-e^{-d}d^d/d!\).  Hence \(c_{d,r}\ge\gamma_d\), and so
\(c_{d,r}^{-1}\le\gamma_d^{-1}\).  Finally, Stirling's formula gives
\[
    \gamma_d^{-1}
    =
    \left(1-e^{-d}\frac{d^d}{d!}\right)^{-1}
    =
    1+\frac{1}{\sqrt{2\pi d}}+O(d^{-1}).
\]
This proves the theorem.
\end{proof}

\begin{figure}[t]
    \centering
    \resizebox{0.72\linewidth}{!}{\includegraphics{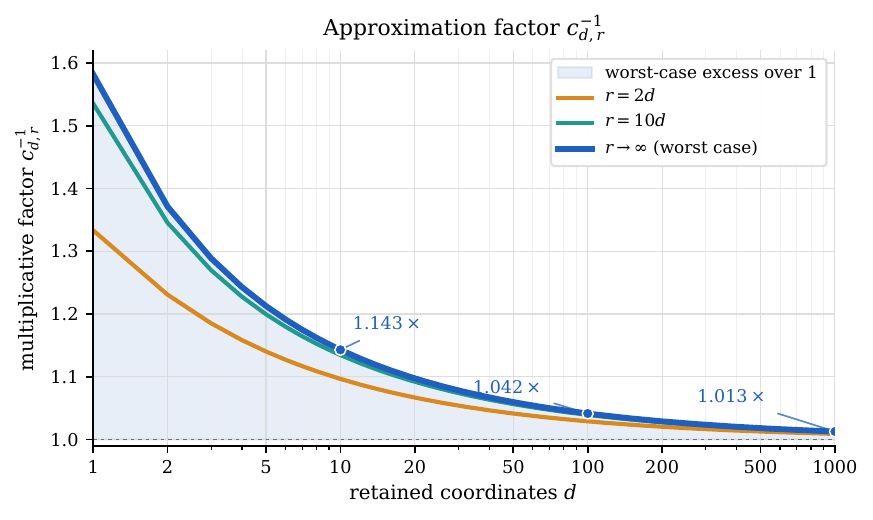}}
    \caption{
    The finite-rank and worst-case correlation-gap factors.  The orange and
    green curves show \(c_{d,2d}^{-1}\) and \(c_{d,10d}^{-1}\), while the
    blue curve is the worst-case limit \(\gamma_d^{-1}\).  The limiting factor
    is already about \(1.143\) at \(d=10\), \(1.042\) at \(d=100\), and
    \(1.013\) at \(d=1000\).
    }
    \label{fig:correlation-gap-constant}
\end{figure}

\paragraph{Quantitative interpretation.}
\Cref{fig:correlation-gap-constant} plots the finite-rank factors
\(c_{d,r}^{-1}\) and their worst-case limit \(\gamma_d^{-1}\). 
This is a
retained-energy guarantee, not a multiplicative reconstruction-error comparison
with \citet{LitvakTikhomirov2018}, since such bounds are not preserved after
subtracting from \(\trace(\Sigma)\).  

\bibliographystyle{elsarticle-harv}
\bibliography{ref}

\end{document}